\newtheorem{definition}{Definition}
\newtheorem{theorem}{Theorem}
\newtheorem{problem}{Problem}
\newtheorem{proposition}{Proposition}
\newtheorem{remark}{Remark}
\newcommand{\lfteqn}{\begin{eqnarray} \begin{array}{lllllll}}
\newcommand{\ndeqn}{\end{array} \nonumber \end{eqnarray}}
\newcommand{\Lfteqn}{\begin{eqnarray} \begin{array}{lllllll}}
\newcommand{\Ndeqn}{\end{array}  \end{eqnarray}}
\begin{document}

\title{\LARGE \bf Distributed Nonblocking Supervisory Control of Timed Discrete-Event Systems with Communication Delays and Losses
}

\author{Yunfeng~Hou and Qingdu~Li
\thanks{Yunfeng Hou (yunfenghou@usst.edu.cn) and Qingdu Li (liqd@usst.edu.cn)  are with the Institute of Machine Intelligence, University of Shanghai for Science and Technology, Shanghai, 200093, China.}}

\maketitle

\begin{abstract}
This paper investigates the problem of distributed  nonblocking supervisory control for timed discrete-event systems (DESs).
The distributed supervisors communicate with each other over  networks subject to nondeterministic communication  delays and losses.
Given that the delays are counted by time, techniques have been developed to model the dynamics of the communication channels.
By incorporating the dynamics of the communication channels into the system model, we construct a communication automaton to model the interaction process between the supervisors.
Based on the communication automaton, we define the observation mappings for the supervisors, which consider delays and losses occurring in the communication channels. 
Then, we derive the necessary and sufficient conditions for
the existence of a set of  supervisors for distributed nonblocking  supervisory control.
These conditions are expressed as network
controllability, network joint observability, and system language closure. 
Finally, an example of intelligent manufacturing is provided to show the application of  the proposed framework.
\end{abstract}

\begin{IEEEkeywords}
Timed DESs,\ distributed nonblocking supervisory control,\  communication delays and losses,\ network controllability, \ network joint observability.
\end{IEEEkeywords}

\section{Introduction}

The supervisory control theory  is the core theory of discrete-event systems (DESs) and was initiated in the early 1980s \cite{ramadge1987supervisory,lin88is2}.
Due to the limitation of the controller's memory, it is
sometimes not possible to control a large-scale system using a
single supervisor.
In this regard, a decentralized supervisory control approach was proposed in \cite{Rudie92tac,lin88is} such that several supervisors work as a group to control the system but do not communicate with each other.
After that, as the development of wireless ad hoc networks, a distributed supervisory control approach that allows the supervisors to exchange information of event occurrences over networks was considered in \cite{Barrett00TAC,Rudie03TAC}, where communications between the supervisors always can be performed instantaneously.
However, due to the network characteristics, undetermined delays and losses always exist in 
data communication between the local supervisors, especially when the communication distance is long.
Thus, maintaining safety of the system under communication delays and losses is currently an active subject of research for real-life  applications.

In \cite{Tripakis04TAC,Hiraishi09TAC},  by assuming that the delays are upper bounded by $k$ event occurrences, the  supervisory control problem involving communication delays among multiple supervisors was studied.
Under similar assumptions as in \cite{Tripakis04TAC,Hiraishi09TAC}, the distributed fault diagnosis problem has also been addressed in \cite{Qiu08TAC}.
However, as we have discussed in \cite{Yunfeng23IJC}, it is difficult to determine the upper bound of the communication delays using number of event occurrences, as the interarrival time between two successive event occurrences is usually nondeterministic.
In \cite{renyuan16ijc}, procedures are developed to check if there exists a set of delay-robust supervisors that can work properly under communication delays.
The work of \cite{renyuan16ijc} is further extended to timed DESs in \cite{renyuan22ijc}.
The frameworks of \cite{renyuan16ijc,renyuan22ijc}, however, require that an occurred controllable event cannot reoccur unless this event occurrence  has been communicated to the connected supervisors. 
This requirement slows down the operating speed of the system.
Reference \cite{yuting22icca} gives further insight into distributed supervisory control with delays by considering: (i) communication delays between the plant and the supervisors; (ii) communication delays between the supervisors.
Command execution automata are constructed in  \cite{yuting22icca} to ensure that a new control command can be executed only when an observable event has occurred in the plant.
This requirement needs additional communications from the sensors to the actuators, which can be costly.
More recently, given that all the events are observable, the authors in \cite{Moormann23TASE} have studied how to distribute a set of local supervisors from a monolithic global one by taking communication delays among these supervisors into consideration.
Reference \cite{Kalyon14tac} has investigated how to synthesize a set of distributed supervisors under communication delays such that the controlled system can never reach a bad state.
Some other related works can be found in \cite{shu14ac,lin20tcns,yao20csl,yunfeng23tcns}, where a set of supervisors are used to control or diagnose a DES with observation delays between the plant and the supervisors.
All the supervisors in \cite{shu14ac,lin20tcns,yao20csl,yunfeng23tcns} make decisions based on their own observations, i.e., there are no communications among them.


\begin{figure}
	\begin{center}
		\includegraphics[width=6.8cm]{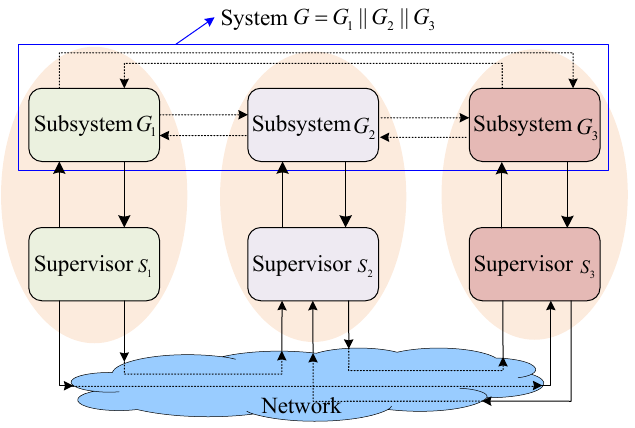}    
		\caption{Distributed supervisory control over networks.} 
		\label{Fig1}
	\end{center}
\end{figure}

As shown in Fig. \ref{Fig1}, 
the system $G$ considered in this paper is composed of several subsystems  acting in parallel.
All the subsystems are modeled by
timed automata that share one common $tick$ event representing the elapse
of one unit of time.
The timed DESs were first proposed in \cite{brandin94tac} for supervisory control with full observation, and then is extended  to partial observation in \cite{lin95tac}.
Each subsystem is controlled by a supervisor.
These supervisors exchange information
about the event occurrences over networks subject to non-negligible delays and losses.
The communication delays are measured by the number of $tick$ occurrences.
To define the systems' observation mappings under possible delays and losses, we construct a communication automaton $\tilde{G}$ that incorporates the ``dynamics of the communication channels'' into  $G$.
After that, network controllability and network joint coobservability  are introduced
to capture whether there is a set of nonblocking  supervisors that can make sufficient observations under communication delays and losses.
Finally, an example is provided to show the application of the proposed framework.

The main contributions of this paper are  as follows.
Different from \cite{Tripakis04TAC,Hiraishi09TAC,Qiu08TAC}, this paper measures the delays using time.
In contrast to \cite{renyuan16ijc,renyuan22ijc,yuting22icca}, this paper does not impose  additional requirements on the event occurrence and the control command execution.
That is, an event that is active at the current state can occur as long as it is enabled by the current command, and a  command can be executed as long as it has been delivered to the actuator.
Compared with  \cite{Moormann23TASE,Kalyon14tac},  the system considered in this paper
can be partially observable, which is more realistic
because of the detection and communication limitations.
Different from \cite{shu14ac,lin20tcns,yao20csl,yunfeng23tcns}, this paper considers the communications between the supervisors.


The rest of this paper is organized  as follows.
Section II is the preliminaries.
Section III models how each supervisor interacts with the plant and the other supervisors. 
Section IV solves the distributed networked supervisory control problem.
Section V concludes this paper.

\section{Preliminaries}

\label{sec:ii}

In this paper, the system is modeled as a timed
automaton $G=(Q,\tilde{\Sigma},\delta,\Gamma,q_{0}, Q_{m}),$
where
$Q$ is the finite set of states;
$\tilde{\Sigma}=\Sigma\cup\{tick\}$ is the finite set of events with a special event $tick$ representing the elapse of one unit of time;
$\delta:Q \times \tilde{\Sigma} \rightarrow Q$ is the transition function;
$\Gamma: Q \rightarrow 2^{\tilde{\Sigma}}$ is the active event function;
$q_{0}$ is the initial state;
$Q_{m}$ is the set of marked states that are marked by double circles or blocks in this paper.
$\delta$ is extended to the domain $Q \times {\tilde{\Sigma}}^*$ in the usual way \cite{lafortune07book}.
The languages generated and marked by $G$ are denoted by $\mathcal{L}(G)$ and $\mathcal{L}_m(G)$, respectively.
$\varepsilon$ is the empty string.
The prefix-closure of a string $s\in \mathcal{L}(G)$ is defined by $\overline{\{s\}}=\{u\in \tilde{\Sigma}^*:(\exists v\in \tilde{\Sigma}^*)uv=s\}$.
Given a language $L\in \tilde{\Sigma}^*$, $\overline{L}=\{u\in \tilde{\Sigma}^*:(\exists s\in L)u \in \overline{\{s\}}\}$.
We say that $L$ is prefix-closed if $L=\overline{L}$.
{We say that $L$ is $\mathcal{L}_m(G)$-closed if $L=\overline{L} \cap \mathcal{L}_m(G)$.}
$G$ is nonblocking if $\mathcal{L}(G)=\overline{\mathcal{L}_m(G)}$.
$Ac(G)$ is the accessible part of $G$.


Some events in ${\Sigma}$ can be enforced since they can pre-empt the occurrence of $tick$.
We denote $\Sigma_{for}\subseteq {\Sigma}$ by the set of enforceable events.
We assume that $G$ satisfies the following three conditions: 1) A finite number events can occur in one unit of time. i.e., $(\forall q \in Q)(\forall s \in \Sigma^*\setminus \{\varepsilon\})\delta(q,s)\neq q$; 2) Time will never stop, i.e., $(\forall q \in Q)(\exists \sigma \in \tilde{\Sigma})\delta(q,\sigma)!$; 3) If no $tick$ is active after a string, some enforceable events must be active after this string, i.e., $(\forall s \in \mathcal{L}(G))s\ tick \notin \mathcal{L}(G)\Rightarrow (\exists \sigma \in \Sigma_{for})s\sigma \in \mathcal{L}(G)$.

Given two timed automata $G_1$ and $G_2$, we say that $G_1 \parallel G_2$ is the parallel composition of $G_1$ and $G_2$ \cite{lafortune07book}.
We say that $G_1$ is a subautomaton of $G_2$, denoted by $G_1 \sqsubseteq G_2$, if $G_1$ can be obtained from $G_2$ by removing some states of $G_2$ and all the transitions connected to these states.
The system $G$ is defined as the parallel composition of $n$  subsystems $G=G_1||G_2||\cdots||G_n$, where $G_i=(Q_i,\tilde{\Sigma}_i,\delta_i,\Gamma_i,q_{0,i}, Q_{m,i})$ for $i=1,2,\ldots,n$.
All the subsystems are independent but share a globle clock.
Thus, all the events in $G_i$ except for $tick$ are disjoint, i.e., for all $i,j \in \mathcal{A}$ with $i \neq j$,  $\tilde{\Sigma}_{i} \cap \tilde{\Sigma}_{j}=\{tick\}$ and $(\tilde{\Sigma}_{i}\setminus \{tick\}) \cap (\tilde{\Sigma}_{j} \setminus \{tick\})=\emptyset$.

Let $K \subseteq \mathcal{L}(G)$ be the specification language given as the control objective.
In many applications, the original $G$ may not satisfy
 $K$. 
To make the system fulfill $K$,  a set of distributed supervisors is used to control $G$.
We let $\mathcal{A}=\{1,2,\ldots,n\}$ be the index set of the distributed supervisors.
Each supervisor $i\in\mathcal{A}$ is responsible for a subsystem $G_i$.
For each local supervisor $i\in \mathcal{A}$, we denote $\tilde{\Sigma}_{c,i} \subseteq \tilde{\Sigma}_i$ and $\tilde{\Sigma}_{uc,i} = \tilde{\Sigma}_i \setminus \tilde{\Sigma}_{c,i}$ by the set of its controllable and uncontrollable events, respectively.
We denote $\tilde{\Sigma}_{o,i} \subseteq \tilde{\Sigma}_i$  and $\tilde{\Sigma}_{uo,i} = \tilde{\Sigma}_i \setminus \tilde{\Sigma}_{o,i}$ by the set of its observable and unobservable events, respectively.
We let $\tilde{\Sigma}_{uc}=\tilde{\Sigma}_{uc,1}\cup\cdots\cup \tilde{\Sigma}_{uc,n}$, $\tilde{\Sigma}_c=\tilde{\Sigma}\setminus \tilde{\Sigma}_{uc}$ and $\tilde{\Sigma}_{uo}=\tilde{\Sigma}_{uo,1}\cup\cdots\cup \tilde{\Sigma}_{uo,n}$, $\tilde{\Sigma}_o=\tilde{\Sigma}\setminus \tilde{\Sigma}_{uo}$.
For any $\sigma \in \tilde{\Sigma}$, we denote $\mathcal{A}^c(\sigma)=\{i \in \mathcal{A}:\sigma \in \tilde{\Sigma}_{c,i}\}$ by the set of supervisors who can control the occurrence of $\sigma$.


We use a boolean matrix $\textbf{COM} \in \{0,1\}^{n \times n}$ to describe the communication topology between the $n$ supervisors.
We denote $\textbf{COM}_{ij}$ by the boolean value in row $i$ and column $j$  of $\textbf{COM}$.
For supervisors $i,j \in \mathcal{A}$, there is a communication from supervisors $i$ to  $j$ iff $\textbf{COM}_{ij}=1$.
That is, $\textbf{COM}_{ij}=1$ indicates that supervisor $j$ can communicate information to supervisor $i$.
Meanwhile, if there is no communication from supervisors $i$ to $j$, we have $\textbf{COM}_{ij}=0$.
Note that $\textbf{COM}_{ii}=0$ for all $i\in \mathcal{A}$, i.e., a supervisor does not need to communicate with itself.
$tick$ can be sensed by all the supervisors without any delays and losses.
For supervisors $i,j\in \mathcal{A}$, if $\textbf{COM}_{ij}=1$,  we let $\Sigma_{ij} \subseteq \tilde{\Sigma}_{o,i}\setminus \{tick\}$ be the set of events that supervisor $i$ communicates to supervisor $j$.
That is, when an event $\sigma \in \Sigma_{ij}$ occurs in plant $G$, it will be communicated from supervisors $i$ to  $j$.
We denote $\Sigma_{L,ij} \subseteq \Sigma_{ij}$ by the set of events that may be lost  when supervisor $i$ communicates with supervisor $j$.

The communications between the supervisors
are carried out over a shared network, which incur communication delays and losses.
We denote the communication channel from supervisor $i$ to supervisor $j$ by $\textbf{CH}_{ij}$.
We make the following assumptions on $\textbf{CH}_{ij}$: 1) First-in-first-out (FIFO) is satisfied, i.e., the events queued at $\textbf{CH}_{ij}$ are communicated to supervisor $j$ in the same order as they were observed by supervisor $i$; 2) the communication delays are
upper bounded by $N_{ij}$ $tick$ occurrences, i.e., any event delayed at $\textbf{CH}_{ij}$ can be communicated (if no loss) before no more than $N_{ij}$ units of time; 3) The communication losses occurring in $\textbf{CH}_{ij}$ are nondeterministic, i.e., for any event delayed at $\textbf{CH}_{ij}$, if it is defined in $\Sigma_{L,ij}$, then it can be lost at any time during the communication.

\section{Modeling the distributed control systems}

\subsection{Modeling the communication channels}

\begin{definition}\label{DefII}
Given two supervisors $i,j \in \mathcal{A}$ with $\mathbf{COM}_{ij}=1$, the communication channel $\mathbf{CH}_{ij}$ configuration is defined as $\theta_{ij}=(\sigma_1,n_1)\cdots(\sigma_k,n_k),$
where $\sigma_1 \cdots \sigma_k \in \Sigma_{ij}^*$ is a sequence of communication events in $\Sigma_{ij}$ being transmitted from supervisor $i$ to supervisor $j$, and $n_d \in [0,N_{ij}]$, $d=1,\ldots,k$ is the number of  $tick$ occurrences since $\sigma_d$ has been pushed into $\mathbf{CH}_{ij}$.
\end{definition}
We denote $\Theta_{ij} \subseteq ({\Sigma_{ij} \times [0,N_{ij}]})^{\le T_{ij}} $ by the set of all the communication channel $\textbf{CH}_{ij}$ configurations, where $T_{ij} \in \mathbb{N}$ is the maximum length of events delayed at  $\textbf{CH}_{ij}$.
Since the delays are upper bounded
by $N_{ij}$, 
the
number of events delayed at $\textbf{CH}_{ij}$ is upper bounded by the maximum number of 
events in $\Sigma_{ij}$ that may occur in $N_{ij}$ units of times.
Since only a finite number of events can occur in one $tick$, $T_{ij}$ is finite.

By adding $\theta_{ij}=\varepsilon$ for all $i,j \in \mathcal{A}$ such that $\textbf{COM}_{ij}=0$,  the state of the communication channels is defined as $\bar{\theta}=[\theta_{11},\ldots,\theta_{1n},\ldots,\theta_{n1},\ldots,\theta_{nn}],$ where $\theta_{ij}\in \Theta_{ij}$ is the communication channel $\textbf{CH}_{ij}$ configuration.
Let $\Theta=\Theta_{11} \times \cdots \times \Theta_{1n} \times \cdots \times \Theta_{n1} \times \cdots \times \Theta_{nn}$ be the set of all communication channel configurations.
We need the following notions to proceed..
\begin{itemize}
\item 
Given any $\theta_{ij} \in \Theta_{ij}$, if $\theta_{ij}=(\sigma_1,n_1)\cdots(\sigma_k,n_k) \neq \varepsilon$, let $\textbf{MAX}(\theta_{ij})=n_1$ be the maximum delays occurring in $\textbf{CH}_{ij}$, and if $x_{ij} = \varepsilon$, let $\textbf{MAX}(\theta_{ij})=0$;
\item
To update $\bar{\theta}$ after a $tick$, we define  ``$+$'' on $\theta_{ij}$ as follows.
For any $\theta_{ij} \in \Theta_{ij}$, (i) if $\theta_{ij}=\varepsilon$, $\theta_{ij}^+=\varepsilon$; and (ii) if $\theta_{ij}=(\sigma_1,n_1)\cdots(\sigma_k,n_k)\neq \varepsilon$, $\theta_{ij}^+=(\sigma_1,n_1+1)\cdots(\sigma_k,n_k+1)$;
\item
For any $\bar{\theta}=[\theta_{11},\ldots,\theta_{1n},\ldots,\theta_{n1},\ldots,\theta_{nn}]\in \Theta$ and any $i,j \in \mathcal{A}$,
$
\textbf{REP}_{ij}(\bar{\theta},\theta)
$
 replaces $\theta_{ij}$ by $\theta$ in $\bar{\theta}$, without changing the remaining elements of $\bar{\theta}$.
\end{itemize}
The following operators are defined to update  $\bar{\theta}\in \Theta$.


\begin{enumerate}
\item 
When a $tick$ occurs in $G$, we define $\textbf{TIME}:\Theta  \rightarrow \Theta$ as follows: for all $\bar{\theta}=[\theta_{11},\ldots,\theta_{1n},\ldots,\theta_{n1},\ldots,\theta_{nn}] \in \Theta$,
 \Lfteqn\label{Eq1}
\textbf{TIME}(\bar{\theta})= \begin{cases}
   \bar{\theta}'  &\text{if}\ (\forall i,j\in \mathcal{A})\textbf{MAX}(\theta_{ij}^+) \le N_{ij}\\ 
    \not ! & \text{otherwise}
  \end{cases}
  \Ndeqn  
where $\bar{\theta}'=[\theta_{11}^+,\ldots,\theta_{1n}^+,\ldots,\theta_{n1}^+,\ldots,\theta_{nn}^+]$.

\item When a $\sigma \in \tilde{\Sigma}\setminus\{tick\}$ occurs in $G$,  define $\textbf{IN}: {\Theta} \times \Sigma \rightarrow \Theta$ as follows: for all  $\bar{\theta}=[\theta_{11},\ldots,\theta_{1n},\ldots,\theta_{n1},\ldots,\theta_{nn}] \in \Theta$ and all $\sigma \in {\Sigma}$, 
  \Lfteqn\label{Eq2}
  \textbf{IN}(\bar{\theta},\sigma)=[\theta'_{11},\ldots,\theta'_{1n},\ldots,\theta'_{n1},\ldots,\theta'_{nn}]
\Ndeqn  
where if $\sigma \in \Sigma_{ij}$, then $\theta_{ij}'=\theta_{ij}(\sigma,0)$, and if $\sigma \notin \Sigma_{ij}$, then $\theta_{ij}'=\theta_{ij}$ for all $i,j\in \mathcal{A}$.

\item When a $\sigma \in {\Sigma}_{ij}$ delayed at $\textbf{CH}_{ij}$ is communicated, we define $\textbf{OUT}_{ij}:\Theta \times {\Sigma}_{ij}  \rightarrow  \Theta$ as follows: for all  $\bar{\theta}=[\theta_{11},\ldots,\theta_{1n},\ldots,\theta_{n1},\ldots,\theta_{nn}] \in \Theta$ and all $\sigma \in {\Sigma}_{ij}$,
 \Lfteqn\label{Eq3}
  \textbf{OUT}_{ij}(\bar{\theta},\sigma)= \begin{cases}
   \bar{\theta}' &\text{if}\ \theta_{ij}=(\sigma_1,n_1)\cdots (\sigma_k,n_k)\\
& \neq \varepsilon \wedge \sigma_1=\sigma\\ 
    \not ! & \text{otherwise},
  \end{cases}
  \Ndeqn
where $\bar{\theta}'=\textbf{REP}_{ij}(\bar{\theta},\theta_{ij} \setminus (\sigma_1,n_1))$.

\item When the $d$th event is lost from the communication channel $\textbf{CH}_{ij}$, we define $\textbf{LOSS}_{ij}:\Theta \times \mathbb{N}  \rightarrow  \Theta$ as follows: for all $\bar{\theta}=[\theta_{11},\ldots,\theta_{1n},\ldots,\theta_{n1},\ldots,\theta_{nn}] \in \Theta$ and all $d \in \mathbb{N}$,
 \Lfteqn\label{Eq4}
  \textbf{LOSS}_{ij}(\bar{\theta},d)= \begin{cases}
   \bar{\theta}'  &\text{if}\ \theta_{ij}=(\sigma_1,n_1)\cdots (\sigma_k,n_k)\\ 
   & \neq \varepsilon \wedge d \le k \wedge \sigma_d \in \Sigma_{L,ij}\\
    \not ! & \text{otherwise},
  \end{cases}
  \Ndeqn
where $\bar{\theta}'=\textbf{REP}_{ij}(\bar{\theta},\theta)$ with $$\theta=(\sigma_1,n_1)\cdots (\sigma_{d-1},n_{d-1})(\sigma_{d+1},n_{d+1}) \cdots (\sigma_k,n_k).$$

\end{enumerate}


Equation (\ref{Eq1}): Since the delays occurring in $\textbf{CH}_{ij}$ are no larger than $N_{ij}$, $\textbf{TIME}(\bar{\theta})$ is defined iff $\textbf{MAX}(\theta_{ij}^+) \le N_{ij}$ for all $i,j \in \mathcal{A}$.
When a $tick$ occurs, by the definition of ``$+$'', all the $\theta_{ij}$ should be updated to $\theta_{ij}^+$.
Equation (\ref{Eq2}): When an event $\sigma \in \Sigma$ occurs in $G$, if  $\sigma \in \Sigma_{ij}$, it will be pushed into the channel $\textbf{CH}_{ij}$.
By FIFO, $\textbf{IN}_{ij}(\bar{\theta},\sigma)$ adds $(\sigma,0)$ to the end of  $\theta_{ij}$.
For the remaining $\theta_{rt}$ such that $r \neq i \lor t \neq j$, we keep them unchanged as $\sigma \notin \Sigma_{rt}$.
Equation (\ref{Eq3}): When an event queued at $\textbf{CH}_{ij}$ is communicated, by FIFO, it must be the first event.
Thus,  $\textbf{OUT}_{ij}(\bar{\theta},\sigma)$ is defined if $\theta_{ij}=(\sigma_1,n_1)\ldots(\sigma_k,n_k)\neq \varepsilon \wedge \sigma_1=\sigma$.
When $\sigma$ is communicated,  $\textbf{OUT}_{ij}(\bar{\theta},\sigma)$ removes  $(\sigma_1,n_1)$ from the head of $\theta_{ij}$ in $\bar{\theta}$.
Equation (\ref{Eq4}): The $d$th event can be lost from $\textbf{CH}_{ij}$ if,  (i) $\textbf{CH}_{ij}$ is not empty, i.e., $\theta_{ij}=(\sigma_1,n_1) \cdots (\sigma_k,n_k)\neq \varepsilon$; (ii) the queue length of events delayed at $\textbf{CH}_{ij}$ is no smaller than $d$, i.e., $d \le k$; and (iii) the $d$th event queued at $\textbf{CH}_{ij}$ can be lost, i.e., $\sigma_d \in \Sigma_{L,ij}$.
Thus, $\textbf{LOSS}_{ij}(\bar{\theta},d)$ is defined, if $\theta_{ij}=(\sigma_1,n_1)\cdots (\sigma_k,n_k) \neq \varepsilon \wedge d \le k \wedge \sigma_d \in \Sigma_{L,ij}$.
If the $d$th  event is lost from $\textbf{CH}_{ij}$, $\textbf{LOSS}_{ij}(\bar{\theta},d)$ removes the $d$th component from $\theta_{ij}$  in $\bar{\theta}$.

\subsection{Communication automaton}

Next, let us construct the communication automaton $\tilde{G}$ by incorporating ``the dynamics of the communication channels'' into states of system $G$.
Based on $\tilde{G}$, we define the observation mappings of  the supervisors under the communication delays and losses.
Let us first define two special types of event.
\begin{enumerate}
\item To keep track of what has been communicated from  $\textbf{CH}_{ij}$, we define bijection $f_{ij}:\Sigma_{ij} \rightarrow \Sigma^f_{{ij}}$ such that $\Sigma^f_{ij}=\{f_{ij}(\sigma):\sigma\in \Sigma_{ij}\}$.
For all $\sigma \in \Sigma_{ij}$, we use $f_{ij}(\sigma)$ to denote that the first event $\sigma \in \Sigma_{ij}$ delayed at $\textbf{CH}_{ij}$ is communicated.
We denote $\Sigma^f=\Sigma^f_{11} \cup \cdots \cup \Sigma^f_{1n} \cup  \cdots \cup \Sigma^f_{n1} \cup \cdots \cup \Sigma^f_{nn}$;

\item 
To keep track of what has been lost from $\textbf{CH}_{ij}$, we define bijection $g_{ij}:\mathbb{N} \rightarrow \Sigma^g_{{ij}}$ such that $\Sigma^g_{ij}=\{g_{ij}(d):d \in \mathbb{N}\}$.
For all $d \in \mathbb{N}$, we use $g_{ij}(d)$ to denote that the $d$th event delayed at $\textbf{CH}_{ij}$ is lost.
We denote $\Sigma^g=\Sigma^g_{11} \cup \cdots \cup \Sigma^g_{1n} \cup  \cdots \cup \Sigma^g_{n1} \cup \cdots \cup \Sigma^g_{nn}$.

\end{enumerate}

\begin{figure}
	\begin{center}
		\includegraphics[width=8.9cm]{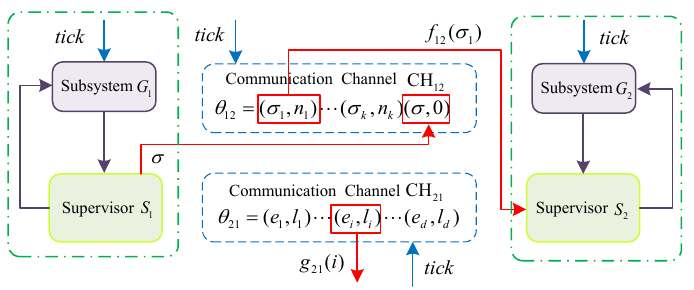}    
		\caption{The distributed networked supervisory control system.} 
		\label{Fig2}
	\end{center}
\end{figure}

We  use Fig. \ref{Fig2} to gain some intuitions for constructing $\tilde{G}$.
Let $n=2$ for brevity.
The state of $\tilde{G}$ is defined as a pair $\tilde{q}=[q,\bar{\theta}=(\theta_{12},\theta_{21})]\in Q \times \Theta$.\footnote{Here, we omit $\theta_{11}$ and $\theta_{22}$ of $\bar{\theta}$ as $\theta_{11}=\theta_{22}=\varepsilon$ all the time.}
When an event $\sigma\in \Sigma_{12}$ occurs in $G$, by definition, $(\sigma,0)$ will be added to the end of $\theta_{12}$.
By FIFO, events delayed at $\textbf{CH}_{12}$ are communicated in the same order as they occur.
As shown in Fig. \ref{Fig2}, $\sigma_1$  is the first event communicated to supervisor 2, followed by $\sigma_2$, $\sigma_3$, and so on.
Meanwhile, if $e_i \in \Sigma_{L,21}$, then $e_i$ may be lost from $\textbf{CH}_{21}$.
And if $e_i$ is lost, the $i$th element of $\theta_{21}$, i.e., $(e_i,l_i)$ will be removed from $\theta_{21}$.
Formally, we construct $\tilde{G}=Ac(\tilde{Q}, \tilde{E}, \tilde{\delta}, \tilde{\Gamma},\tilde{q}_{0},\tilde{Q}_{m}),$  where  $\tilde{Q} \subseteq Q \times \Theta$ is the state space;
$\tilde{E} \subseteq \tilde{\Sigma} \cup \Sigma^f \cup \Sigma^g$ is the event set;
$\tilde{\Gamma}:\tilde{Q}\rightarrow 2^{\tilde{E}}$ is the active event function;
$\tilde{q}_{0}=(q_{0},\underbrace{\varepsilon, \ldots,\varepsilon}_{n \times n})$ is the initial state; 
$\tilde{Q}_m=\{(q,\bar{\theta}) \in \tilde{Q}: q \in Q_m \}$ is the set of marked states;
the transition function $\tilde{\delta}: \tilde{Q} \times \tilde{E} \rightarrow \tilde{Q}$ is defined as follows:

\begin{itemize}

 \item For any $\tilde{q}=({q},\bar{\theta}) \in \tilde{Q}$ and event $tick$, $\tilde{\delta}(\tilde{q}, tick)=$
 \Lfteqn\label{Eq5}
  \begin{cases}
  [\delta({q},tick),\mathbf{TIME}(\bar{\theta})]  & \text{if}\ \delta({q},tick)! \\
&\wedge \mathbf{TIME}(\bar{\theta})! \\
   \not ! & \mbox{otherwise.}
  \end{cases}
  \Ndeqn 

 \item For any $\tilde{q}=({q},\bar{\theta}) \in \tilde{Q}$ and $\sigma \in {\Sigma}$, $\tilde{\delta}(\tilde{q}, \sigma)=$
 \Lfteqn\label{Eq6}
  \begin{cases}
  [\delta({q},\sigma),\mathbf{IN}(\bar{\theta},\sigma)]  & \text{if}\ \delta({q},\sigma)!\\
  \not ! & \mbox{otherwise.}
  \end{cases}
  \Ndeqn

\item For any $\tilde{q}=({q},\bar{\theta}) \in \tilde{Q}$ and $f_{ij}(\sigma) \in \Sigma^f$, $\tilde{\delta}(\tilde{q}, f_{ij}(\sigma))= $

  \Lfteqn\label{Eq7}
  \begin{cases}
   [{q},\mathbf{OUT}_{ij}(\bar{\theta},\sigma)]  &\text{if}\ \mathbf{OUT}_{ij}(\bar{\theta}, \sigma)!\\ 
   \not ! & \text{otherwise.}
  \end{cases}
  \Ndeqn

\item For any $\tilde{q}=({q},\bar{\theta}) \in \tilde{Q}$ and $g_{ij}(d) \in \Sigma^g$, $ \tilde{\delta}(\tilde{q}, g_{ij}(d))= $

  \Lfteqn\label{Eq8}
 \begin{cases}
   [{q},\mathbf{LOSS}_{ij}(\bar{\theta},d)]  &\text{if}\ \textbf{LOSS}_{ij}(\bar{\theta},d)!\\ 
   \not ! & \text{otherwise.}
  \end{cases}
  \Ndeqn  

\end{itemize}

Equation (\ref{Eq5}): for any $\tilde{q}=({q},\bar{\theta}) \in \tilde{Q}$, $tick$ is defined at $\tilde{q}$ iff $tick$ is active at $\bar{q}$ and the delays occurring in  $\textbf{CH}_{ij}$ are no larger than $N_{ij}$.
Thus, $\tilde{\delta}(\tilde{q}, tick)$ is defined iff $\delta({q},tick)! \wedge \mathbf{TIME}(\bar{\theta})!$.
If $tick$ occurs in system $G$,  we update ${q}$ to $\delta({q},tick)$ for tracking the plant state, and update $\bar{\theta}$ to $\mathbf{TIME}(\bar{\theta})$ for updating the communication delays.
Equation (\ref{Eq6}): for any $\tilde{q}=({q},\bar{\theta}) \in \tilde{Q}$ and any $\sigma \in {\Sigma}$,  $\sigma$ is defined at $\tilde{q}$ iff $\delta({q},\sigma)!$.
When $\sigma$ occurs, we set ${q} \leftarrow \delta({q},\sigma)$ for tracking the plant state.
Meanwhile, by (\ref{Eq2}),  $\bar{\theta}\leftarrow \mathbf{IN}(\bar{\theta},\sigma)$.
Equation (\ref{Eq7}): for any $\tilde{q}=({q},\bar{\theta}) \in \tilde{Q}$ and any $f_{ij}(\sigma) \in {\Sigma}^f$, by FIFO, $\sigma$ can be communicated from supervisors $i$ to $j$ iff $\sigma$ is the first event in $\textbf{CH}_{ij}$, i.e., $\mathbf{OUT}_{ij}(\bar{\theta},\sigma)!$.
When $\sigma$ is communicated, by (\ref{Eq3}), $\bar{\theta} \leftarrow \textbf{OUT}_{ij}(\bar{\theta},\sigma)$.
Equation (\ref{Eq8}): for any $\tilde{q}=({q},\bar{\theta}) \in \tilde{Q}$ and any $g_{ij}(d) \in {\Sigma}^g$, the $d$th event may be lost from $\textbf{CH}_{ij}$ iff $\textbf{LOSS}_{ij}(\bar{\theta},d)!$.
If the $d$th  event is lost from $\textbf{CH}_{ij}$, by (\ref{Eq4}), $\bar{\theta} \leftarrow \mathbf{LOSS}_{ij}(\bar{\theta},d)$.


\begin{remark}

The computational complexity for the construction of $\tilde{G}$ is determined by the state space of $\tilde{G}$. 
By the definition of $\tilde{G}$, $|\tilde{Q}|$ is upper bounded by $|Q| \times |\Theta|$.
Since $\Theta=\Theta_{11} \times \cdots \times \Theta_{1n}\times \cdots \times \Theta_{n1}\times \cdots \times \Theta_{nn}$ and $\Theta_{ij}\subseteq (\Sigma_{ij}\times [0,N_{ij}])^{T_{ij}}$, $|\tilde{Q}|$ is upper bounded by
$\sum_{i=1}^n\sum_{j=1}^n |Q| \times |\Sigma|^{T_{ij}} \times (N_{ij}+1)^{T_{ij}}$.
Therefore, the complexity for constructing $\tilde{G}$ is polynomial with respect to $|Q|$, $|\Sigma|$, $n$, and $N_{ij}$ but is exponential with respect to $T_{ij}$.
\end{remark}

An example for the construction of $\tilde{G}$ will be provided in Section IV.
Given a  $\mu \in \mathcal{L}(\tilde{G})$, let $\psi(\mu)$ be the strings obtained by removing all the events in $\Sigma^f\cup\Sigma^g$ from $\mu$, without changing the order of the remaining events.
Given a  $L \subseteq \mathcal{L}(\tilde{G})$, let $\psi(L)=\{\psi(\mu):\mu\in L\}$.
With the above preparisions, we define the observation mapping $\psi^{f_i}(\cdot):\tilde{\Sigma}^* \rightarrow \tilde{\Sigma}^*_o$  for supervisor $i$ as follows:  $\psi^{f_i}(\varepsilon)=\varepsilon$, and for all $\mu,\mu e\in \tilde{\Sigma}^*$,
 \Lfteqn\label{Eq10}
  \psi^{f_i}(\mu e)= \begin{cases}
   \psi^{f_i}(\mu) e  &\text{if}\ e \in \tilde{\Sigma}_{o,i}\\ 
   \psi^{f_i}(\mu) \sigma  &\text{if}\ e=f_{ji}(\sigma) \in \Sigma^f\\ 
    \psi^{f_i}(\mu) & \text{otherwise}.
  \end{cases}
  \Ndeqn
Intuitively, for all $\mu \in \mathcal{L}(G_S)$, $\psi(\mu)$ tracks the string that has occurred in the plant $G$, and $\psi^{f_i}(\mu)$ tracks what the supervisor $i$ has observed thus far. 
The following proposition says that $\tilde{G}$ does not change the system language of $G$.
\begin{proposition}\label{Prop1}
Given a networked DES $G$, we construct $\tilde{G}$ as described above. Then we have
$\psi(\mathcal{L}(\tilde{G}))=\mathcal{L}(G)$.
\end{proposition}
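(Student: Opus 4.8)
The plan is to establish the two inclusions $\psi(\mathcal{L}(\tilde{G}))\subseteq\mathcal{L}(G)$ and $\mathcal{L}(G)\subseteq\psi(\mathcal{L}(\tilde{G}))$ separately, each by induction on string length, exploiting the fact that the first component of any state of $\tilde{G}$ is literally a state of $G$. For $\psi(\mathcal{L}(\tilde{G}))\subseteq\mathcal{L}(G)$ I would prove the stronger invariant that for every $\mu$ with $\tilde{\delta}(\tilde{q}_0,\mu)!$, writing $\tilde{\delta}(\tilde{q}_0,\mu)=(q,\bar{\theta})$, we have $\delta(q_0,\psi(\mu))!$ and $\delta(q_0,\psi(\mu))=q$. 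The base case $\mu=\varepsilon$ holds since $\tilde{q}_0=(q_0,\varepsilon,\ldots,\varepsilon)$ and $\psi(\varepsilon)=\varepsilon$. For the inductive step $\mu=\mu'e$, I split on which of the four clauses (\ref{Eq5})--(\ref{Eq8}) produced the last transition: if $e=tick$ or $e\in\Sigma$, the guard of that clause contains $\delta(q',e)!$, where $(q',\bar{\theta}')=\tilde{\delta}(\tilde{q}_0,\mu')$, and since $\psi(\mu)=\psi(\mu')e$ and $q'=\delta(q_0,\psi(\mu'))$ by the induction hypothesis, we obtain $\delta(q_0,\psi(\mu))=\delta(q',e)=q$; if $e\in\Sigma^f\cup\Sigma^g$, clauses (\ref{Eq7})--(\ref{Eq8}) leave the plant component unchanged and $\psi(\mu)=\psi(\mu')$, so the hypothesis transfers verbatim. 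Applying this to all $\mu\in\mathcal{L}(\tilde{G})$ gives $\psi(\mu)\in\mathcal{L}(G)$.

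For the reverse inclusion $\mathcal{L}(G)\subseteq\psi(\mathcal{L}(\tilde{G}))$, given $s=e_1\cdots e_m\in\mathcal{L}(G)$ I would construct an explicit witness $\mu\in\mathcal{L}(\tilde{G})$ with $\psi(\mu)=s$ by \emph{flushing immediately}: build $\mu$ from $s$ by keeping every $tick$ and replacing every $e_k\in\Sigma$ by $e_k$ followed by the concatenation of $f_{ij}(e_k)$ over all pairs $(i,j)$ with $e_k\in\Sigma_{ij}$, in some fixed order. Since the inserted symbols lie in $\Sigma^f$, $\psi(\mu)=s$ is immediate from the definition of $\psi$; the content of the argument is that $\tilde{\delta}(\tilde{q}_0,\mu)!$. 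This I would show by induction on $k$ with the invariant that after reading the prefix of $\mu$ corresponding to $e_1\cdots e_k$ the state of $\tilde{G}$ equals $(\delta(q_0,e_1\cdots e_k),\bar{\theta}_\varepsilon)$, where $\bar{\theta}_\varepsilon$ is the all-empty configuration: on a $tick$, transition (\ref{Eq5}) is enabled because $\delta(q,tick)!$ (as $s\in\mathcal{L}(G)$) and $\mathbf{TIME}(\bar{\theta}_\varepsilon)=\bar{\theta}_\varepsilon$ (the $\mathbf{MAX}$ guard in (\ref{Eq1}) holds trivially on empty channels); on an event $e_k\in\Sigma$, (\ref{Eq6}) is enabled because $\delta(q,e_k)!$, and $\mathbf{IN}$ sets $\theta_{ij}$ to $(e_k,0)$ for exactly the pairs with $e_k\in\Sigma_{ij}$, so each subsequent $\mathbf{OUT}_{ij}(\cdot,e_k)$ in the inserted block is enabled by (\ref{Eq3}) — $(e_k,0)$ being the head of $\theta_{ij}$ and each such step altering only its own channel — after which the configuration returns to $\bar{\theta}_\varepsilon$ and the plant component is $\delta(q,e_k)$, restoring the invariant. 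Hence $s\in\psi(\mathcal{L}(\tilde{G}))$.

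The only step requiring an idea rather than bookkeeping is the reverse inclusion: a naive replay of $s$ in $\tilde{G}$ can deadlock, because a $tick$-transition is undefined whenever some channel would breach its $N_{ij}$ bound (i.e.\ $\mathbf{TIME}$ is undefined). The flush-immediately construction removes this obstacle by keeping every channel empty between consecutive plant events, so $\mathbf{TIME}$ is always defined; the same conclusion could also be reached by flushing only on demand — firing $f_{ij}$-events for the oldest queued events until the $\mathbf{MAX}$ guard is met for all pairs — which terminates since each flush strictly decreases the bounded total number of queued events. I would also remark that the configurations produced never leave $\Theta$, since with immediate flushing no channel ever holds more events than are pushed in by a single occurrence, so the length bound $T_{ij}$ is respected throughout.
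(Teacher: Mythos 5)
Your proof is correct and follows essentially the same route as the paper: the forward inclusion via induction on the invariant that the first component of $\tilde{\delta}(\tilde{q}_0,\mu)$ equals $\delta(q_0,\psi(\mu))$, and the reverse inclusion via the flush-immediately witness, which is exactly the string $t_0\sigma_1 f_{i_1j_1}(\sigma_1)t_1\cdots$ the paper exhibits. If anything, you are more careful than the paper: you spell out why the witness is actually in $\mathcal{L}(\tilde{G})$ (empty channels keep the $\mathbf{TIME}$ guard satisfied), and you handle events broadcast to several receivers by concatenating all the corresponding $f_{ij}$ symbols, where the paper tacitly assigns each communication event a single channel.
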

\begin{proof}
Please see Appendix A.
\end{proof}

The control objective $K$ is characterized by a specification automaton $H=(Q_H,\tilde{\Sigma},\delta_H,\Gamma_H,q_{0,H}, Q_{m,H}) \sqsubseteq G$ such that all the strings in $\mathcal{L}(G)$ are safe if they end in $Q_H$ and unsafe if they end in $Q/Q_H$.
We define $\tilde{H}=(\tilde{Q}_H,\tilde{\Sigma},\tilde{\delta}_H,\tilde{\Gamma}_H,\tilde{q}_{0},\tilde{Q}_{m,H})$ as the accessible part of the automaton obtained from deleting all the states $({q},\bar{\theta})$ in $\tilde{G}$ with ${q} \in Q \setminus Q_H$.
To achieve $K=\mathcal{L}_m({H})$ under nondeterministic communication delays and losses, all the sequences  in $\mathcal{L}_m(\tilde{G})\setminus \mathcal{L}_m(\tilde{H})$ must be disabled  but sequences in  $\mathcal{L}_m(\tilde{H})$ cannot be disabled.
To achieve this, the supervisor $i$ is defined as a function $S_i: \psi^{f_i}(\mathcal{L}(\tilde{G})) \rightarrow 2^{\tilde{\Sigma}_{i}}$, where $S_i(t)$ is the set of events to be enabled given that the observed string for supervisor $i$ is $t$. We call $S_i(t)$ the control command that supervisor $i$ made when $t$ is observed.
Not all the supervisors are admissible because (i) we cannot disable an uncontrollable event, and (ii) we cannot disable $tick$ if
no enforceable event can pre-empted the occurrence of $tick$ at the current state, which yield the following definition.

\begin{definition}\label{Def3}
Given a set of supervisors $\gamma=[S_1,\ldots,S_n]$, $\gamma$ is said to be admissible if the following two conditions are satisfied for all $S_i$, $i\in \mathcal{A}$.
  \begin{enumerate}
    \item No uncontrollable events can be disabled, i.e., 
\begin{align}\label{Eq10}
(\forall \mu \in \mathcal{L}(\tilde{H}))\tilde{\Sigma}_{uc,i} \subseteq S_i(\psi^{f_i}(\mu));
\end{align}
    \item $tick$ is physically possible and no enforcement event can pre-empt it, then $tick$ cannot be disabled, i.e., 
\begin{align}\label{Eq11}
&(\forall \mu \in \mathcal{L}(\tilde{H}))[\tilde{\Gamma}_H(\tilde{\delta}_H(\tilde{q}_0,\mu)) \cap \Sigma_{for}=\emptyset]\wedge  \notag \\
&[tick \in \tilde{\Gamma}(\tilde{\delta}(\tilde{q}_0,\mu))] \Rightarrow  tick \in S_i(\psi^{f_i}(\mu)).
\end{align}
  \end{enumerate}
\end{definition}

For any $\mu \in \mathcal{L}(\tilde{G})$ and any supervisor $i \in \mathcal{A}$, the control command taking effect at $G$ is $S_i(\psi^{f_i}(\mu))$.
 Thus, under control of a set of admissible supervisors, all the sequences in  $\tilde{G}$ that can occur is given as follows.
 
\begin{definition}\label{Def2}
Given system $G$ and a set of admissible supervisors $\gamma=[S_1,\ldots,S_n]$, all the sequences in $\mathcal{L}(\tilde{G})$ that may occur under $\gamma$, denoted by $\mathcal{L}(\tilde{G},\gamma)$,   are defined as follows:
  \begin{enumerate}
    \item $\varepsilon \in \mathcal{L}(\tilde{G},\gamma)$;
    \item For any $\mu \in \mathcal{L}(\tilde{G},\gamma)$ and  $\sigma \in \tilde{E}$, $\mu \sigma \in \mathcal{L}(\tilde{G},\gamma)$ iff $\mu \sigma \in \mathcal{L}(\tilde{G})$ and  $[\sigma \in \tilde{\Sigma}_c \Rightarrow (\forall i\in \mathcal{A}^c(\sigma))\sigma \in S_i(\psi^{f_i}(\mu))]$.
  \end{enumerate}
  The language of $\mathcal{L}(\tilde{G})$ marked by $\gamma$ is defined as $\mathcal{L}_m(\tilde{G},\gamma)=\mathcal{L}(\tilde{G},\gamma)\cap\mathcal{L}_m(\tilde{G}).$
\end{definition}



Note that if $\sigma=tick$, $\mathcal{A}^c(\sigma)=\mathcal{A}$; if $\sigma \in \tilde{\Sigma}_c\setminus \{tick\}$, $\mathcal{A}^c(\sigma)$ is unique.
The distributed nonblocking networked supervisory control problem (DNNSCP) is formulated as follows.
\begin{problem}\label{Prob1}
Given a DES $G$ with communication delays and losses between the supervisors, and a nonempty specification language $K \subseteq \mathcal{L}_m(G)$ modeled as sub-automaton $H \sqsubseteq G$.
Find a set of admissible nonblocking supervisors $\gamma=[S_1,\ldots,S_n]$ such that
$\mathcal{L}(\tilde{G},\gamma)=\mathcal{L}(\tilde{H})\ \text{and}\ \mathcal{L}_m(\tilde{G},\gamma)=\mathcal{L}_m(\tilde{H}).$
\end{problem}

\section{Distributed nonblocking supervisory control}

\subsection{Network controllability and network joint observability}

Next, we solve the DNNSCP.
First, to deal with uncontrollable events, we define the network controllability as follows.


\begin{definition}\label{Def5}
Given networked DESs $H$ and $G$, we say that the specification language
$K$ is network controllable with respect to $\tilde{\Sigma}_{uc}$ and $G$, if 
1)  $\mathcal{L}(\tilde{H}) \tilde{\Sigma}_{uc}\cap \mathcal{L}(\tilde{G}) \subseteq \mathcal{L}(\tilde{H})$;
and 2)
\begin{align}\label{Eq12}
&(\forall \mu \in \mathcal{L}(\tilde{H}))\mu\ tick \in \mathcal{L}(\tilde{G})\setminus \mathcal{L}(\tilde{H})\Rightarrow  \notag \\
&\tilde{\Gamma}_{{H}}(\tilde{\delta}_{{H}}(\tilde{q}_0,\mu)) \cap \Sigma_{for}\neq \emptyset.
\end{align}
\end{definition}
Definition \ref{Def5} extends the controllability of timed DESs in \cite{lin95tac} to networked timed DESs.
Specifically, condition 1) says that all the uncontrollable events cannot be disabled.
Condition 2) says if we  disable $tick$ after a $\mu \in \mathcal{L}(\tilde{H})$,  there must exist some enforceable events that are available after $\mu$ in $\tilde{H}$.

Second,  the nondetermined observations of the supervisors impose further limitations on behaviors that can be achieved by the partially observed supervisors.
Thus, in addition to network controllability, an additional condition on $\mathcal{L}(\tilde{H})$ and $\mathcal{L}(\tilde{G})$ called network joint observability is proposed as follows.

\begin{definition}
Given networked DESs $H$ and $G$,  we say that the specification language
$K$  is network jointly observable with respect to $G$ and $\psi^{f_i}(\cdot)$, $i \in \mathcal{A}$, if
\begin{align}\label{Eq13}
&(\forall \mu \in \mathcal{L}(\tilde{H}))(\forall \sigma \in \tilde{\Sigma}_{c})\mu\sigma \in \mathcal{L}(\tilde{G}) \setminus \mathcal{L}(\tilde{H})  \notag \\
&\Rightarrow  [(\forall i \in \mathcal{A}^c(\sigma))(\forall \nu \sigma\in \mathcal{L}(\tilde{H}))\psi^{f_i}(\mu) \neq \psi^{f_i}(\nu)].
\end{align}
\end{definition}
Equation (\ref{Eq13}) says that if a controllable event $\sigma$ (including $tick$) needs to be disabled after a $\mu \in \mathcal{L}(\tilde{H})$,
all the supervisors who can disable $\sigma$ must distinguish $\mu$ from all the $\nu \in \mathcal{L}(\tilde{H})$ after which $\sigma$ needs to be enabled.



\begin{theorem}\label{Theo1}
Consider a networked DES $G$.
For a nonempty $K \subseteq \mathcal{L}_m(G)$ modeled as a sub-automaton $H \sqsubseteq G$, DNNSCP is solvable
if and only if the  following three conditions hold:
  \begin{enumerate}
    \item  $K$ is network controllable with respect to $G$ and $\tilde{\Sigma}_{uc}$;
    \item $K$ is network jointly observable with respect to $G$ and $\psi^{f_i}(\cdot)$, $i\in \mathcal{A}$;
\item $\mathcal{L}(\tilde{H})$ is $\mathcal{L}_m(\tilde{G})$-closed, i.e., $\mathcal{L}_m(\tilde{H})=\mathcal{L}(\tilde{H})\cap \mathcal{L}_m(\tilde{G})$.
  \end{enumerate}
  Furthermore, if DNNSCP is solvable, the solution is given as follows:
  For all $t \in \psi^{f_i}(\mathcal{L}(\tilde{G}))$, 
\begin{align}\label{Eq14}
S_i(t)=&\tilde{\Sigma}_{uc,i} \cup (\tilde{\Sigma}_{c,i}\setminus\{\sigma \in \tilde{\Sigma}_{c,i}: (\exists \mu \in \mathcal{L}(\tilde{H})) \notag \\
&\mu\sigma \in \mathcal{L}(\tilde{G})\setminus \mathcal{L}(\tilde{H}) \wedge \psi^{f_i}(\mu)=t\}).
\end{align}
\end{theorem}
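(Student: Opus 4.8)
The plan is to prove the two directions of the equivalence separately, and then verify that the supervisors defined in \eqref{Eq14} actually solve DNNSCP whenever the three conditions hold. For the "only if" direction, I would assume that some admissible nonblocking $\gamma=[S_1,\ldots,S_n]$ solves DNNSCP, i.e.\ $\mathcal{L}(\tilde{G},\gamma)=\mathcal{L}(\tilde{H})$ and $\mathcal{L}_m(\tilde{G},\gamma)=\mathcal{L}_m(\tilde{H})$, and derive each of the three conditions. Network controllability: condition 1) follows because an uncontrollable $\sigma\in\tilde{\Sigma}_{uc}$ appended to $\mu\in\mathcal{L}(\tilde{H})$ is never blocked by admissible supervisors (by \eqref{Eq10} in Definition~\ref{Def3}), so if $\mu\sigma\in\mathcal{L}(\tilde{G})$ then $\mu\sigma\in\mathcal{L}(\tilde{G},\gamma)=\mathcal{L}(\tilde{H})$; condition 2) follows because if $\mu\,tick\in\mathcal{L}(\tilde{G})\setminus\mathcal{L}(\tilde{H})$ then $tick$ must actually be disabled by $\gamma$, which by the admissibility clause \eqref{Eq11} is only possible when an enforceable event is active at $\tilde\delta_H(\tilde q_0,\mu)$. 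Network joint observability: suppose it fails, so there are $\mu\sigma\in\mathcal{L}(\tilde{G})\setminus\mathcal{L}(\tilde{H})$, $\nu\sigma\in\mathcal{L}(\tilde{H})$ and some $i\in\mathcal{A}^c(\sigma)$ with $\psi^{f_i}(\mu)=\psi^{f_i}(\nu)$; then $S_i$ must enable $\sigma$ after $\psi^{f_i}(\nu)$ (else $\nu\sigma\notin\mathcal{L}(\tilde{G},\gamma)=\mathcal{L}(\tilde H)$), hence it also enables $\sigma$ after $\psi^{f_i}(\mu)$, and since this holds for every $i\in\mathcal{A}^c(\sigma)$ we get $\mu\sigma\in\mathcal{L}(\tilde{G},\gamma)=\mathcal{L}(\tilde{H})$, a contradiction. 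Finally $\mathcal{L}_m(\tilde H)$-closedness: $\mathcal{L}_m(\tilde{H})=\mathcal{L}_m(\tilde{G},\gamma)=\mathcal{L}(\tilde{G},\gamma)\cap\mathcal{L}_m(\tilde{G})=\mathcal{L}(\tilde{H})\cap\mathcal{L}_m(\tilde{G})$, using Definition~\ref{Def2}.

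For the "if" direction, I would assume the three conditions and take $\gamma=[S_1,\ldots,S_n]$ as in \eqref{Eq14}. First verify admissibility: $\tilde{\Sigma}_{uc,i}\subseteq S_i(t)$ is immediate from the formula, giving \eqref{Eq10}; and \eqref{Eq11} follows from network controllability condition 2)—if $tick$ is disabled by $S_i$ after $\mu$, the formula forces $\mu\,tick\in\mathcal{L}(\tilde G)\setminus\mathcal{L}(\tilde H)$, whence an enforceable event is active, contradicting the hypothesis of \eqref{Eq11}. Then prove $\mathcal{L}(\tilde{G},\gamma)=\mathcal{L}(\tilde{H})$ by induction on string length. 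The inclusion $\mathcal{L}(\tilde{H})\subseteq\mathcal{L}(\tilde{G},\gamma)$: given $\mu\sigma\in\mathcal{L}(\tilde{H})$ with $\mu\in\mathcal{L}(\tilde{G},\gamma)$ by induction, we must show every $i\in\mathcal{A}^c(\sigma)$ has $\sigma\in S_i(\psi^{f_i}(\mu))$; by \eqref{Eq14} this could fail only if some $\mu'$ with $\psi^{f_i}(\mu')=\psi^{f_i}(\mu)$ has $\mu'\sigma\in\mathcal{L}(\tilde{G})\setminus\mathcal{L}(\tilde{H})$—but that is exactly what network joint observability forbids (with the roles $\mu\leftrightarrow\mu'$, $\nu\leftrightarrow\mu$). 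The reverse inclusion $\mathcal{L}(\tilde{G},\gamma)\subseteq\mathcal{L}(\tilde{H})$: again by induction, if $\mu\sigma\in\mathcal{L}(\tilde{G},\gamma)$, $\mu\in\mathcal{L}(\tilde{H})$, but $\mu\sigma\notin\mathcal{L}(\tilde{H})$, split on whether $\sigma\in\tilde{\Sigma}_{uc}$ (contradicts network controllability condition 1, via $\tilde{\Sigma}_{uc,i}$-events each being controlled by at least the local supervisor—here I need to be careful that an uncontrollable event of the global system is uncontrollable for every supervisor that owns it) or $\sigma\in\tilde{\Sigma}_c$: if $\sigma=tick$ use condition 2) plus admissibility \eqref{Eq11} to see $tick$ would have been disabled; if $\sigma\in\tilde{\Sigma}_c\setminus\{tick\}$ then $\mathcal{A}^c(\sigma)$ is a singleton $\{i\}$ and \eqref{Eq14} explicitly removes $\sigma$ from $S_i(\psi^{f_i}(\mu))$, so $\mu\sigma\notin\mathcal{L}(\tilde{G},\gamma)$, contradiction. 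Once the generated languages agree, $\mathcal{L}_m(\tilde{G},\gamma)=\mathcal{L}(\tilde{G},\gamma)\cap\mathcal{L}_m(\tilde{G})=\mathcal{L}(\tilde{H})\cap\mathcal{L}_m(\tilde{G})=\mathcal{L}_m(\tilde{H})$ by the third condition, and nonblockingness of $\gamma$ follows: $\overline{\mathcal{L}_m(\tilde{G},\gamma)}=\overline{\mathcal{L}_m(\tilde{H})}=\mathcal{L}(\tilde{H})=\mathcal{L}(\tilde{G},\gamma)$, where $\overline{\mathcal{L}_m(\tilde H)}=\mathcal{L}(\tilde H)$ because $H\sqsubseteq G$ with $K\subseteq\mathcal{L}_m(G)$ makes $\tilde H$ nonblocking (or this is assumed as part of the problem data).

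The delicate points—and the main obstacle—are bookkeeping issues with the $tick$ event and with how "controllable/uncontrollable for supervisor $i$" relates to "controllable/uncontrollable for the global system." Specifically, $tick$ is special: it is observed by all supervisors without delay, it belongs to $\mathcal{A}^c(tick)=\mathcal{A}$ (so \emph{all} supervisors must agree to enable it), and it interacts with enforceable events through the admissibility clause \eqref{Eq11} and through network controllability condition 2). I expect the care-intensive step to be checking, in the reverse inclusion of the "if" direction, that a string $\mu\,tick$ leaving $\mathcal{L}(\tilde{H})$ is genuinely pruned: this needs condition 2) to guarantee an enforceable event is active at $\tilde\delta_H(\tilde q_0,\mu)$, then the definition of $\tilde H$ and the fact that the supervisor from \eqref{Eq14} disables $tick$ there (since $\mu\,tick\in\mathcal{L}(\tilde G)\setminus\mathcal{L}(\tilde H)$ and $\psi^{f_i}(\mu)=\psi^{f_i}(\mu)$ trivially), while \eqref{Eq11} is not violated because its hypothesis $\tilde\Gamma_H(\tilde\delta_H(\tilde q_0,\mu))\cap\Sigma_{for}=\emptyset$ is false. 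A second subtlety is that network joint observability is stated with a strict "$\psi^{f_i}(\mu)\neq\psi^{f_i}(\nu)$" quantified over \emph{all} $\nu\sigma\in\mathcal{L}(\tilde H)$ and all $i\in\mathcal{A}^c(\sigma)$, so in the "if" direction one must observe that it suffices for \emph{one} supervisor in $\mathcal{A}^c(\sigma)$ to distinguish and disable—the definition already bakes this in by requiring it for every $i$, which is exactly the conjunction needed to make the disjunctive disabling rule of Definition~\ref{Def2} prune $\mu\sigma$. Everything else is routine induction once Proposition~\ref{Prop1} and the definitions are in place.
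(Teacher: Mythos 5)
Your overall architecture --- verify admissibility of the supervisor in (\ref{Eq14}), prove $\mathcal{L}(\tilde{G},\gamma)=\mathcal{L}(\tilde{H})$ by a two-sided induction, obtain the marked-language equality from $\mathcal{L}_m(\tilde{G})$-closure, and derive the three conditions by contradiction for necessity --- is the same as the paper's, and the sufficiency half is essentially correct as sketched. However, two steps do not close. The main one is in the necessity of network joint observability, where your quantifier bookkeeping fails precisely for $\sigma=tick$. Negating (\ref{Eq13}) gives you \emph{one} supervisor $i\in\mathcal{A}^c(\sigma)$ and one $\nu\sigma\in\mathcal{L}(\tilde{H})$ with $\psi^{f_i}(\mu)=\psi^{f_i}(\nu)$; from $\nu\sigma\in\mathcal{L}(\tilde{G},\gamma)$ you correctly deduce that this particular $S_i$ enables $\sigma$ after $\psi^{f_i}(\mu)$. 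But your next step, ``since this holds for every $i\in\mathcal{A}^c(\sigma)$,'' is unjustified: for $\sigma\in\tilde{\Sigma}_c\setminus\{tick\}$ the set $\mathcal{A}^c(\sigma)$ is a singleton and there is nothing to check, but $\mathcal{A}^c(tick)=\mathcal{A}$, and the remaining supervisors may well distinguish $\mu$ from every good $\nu$ and disable $tick$, so you cannot conclude $\mu\,tick\in\mathcal{L}(\tilde{G},\gamma)$. The paper sidesteps this by proving the ``only if'' under the assumption that the \emph{specific} supervisor (\ref{Eq14}) solves DNNSCP: then $\sigma\notin S_i(\psi^{f_i}(\mu))=S_i(\psi^{f_i}(\nu))$, so the \emph{good} string $\nu\sigma$ is wrongly disabled, contradicting $\mathcal{L}(\tilde{G},\gamma)=\mathcal{L}(\tilde{H})$ directly. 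Your framing with an arbitrary admissible $\gamma$ is the more faithful reading of ``DNNSCP is solvable,'' but under that framing the $tick$ case of (\ref{Eq13}) does not follow from your argument; you would need to either add the reduction ``if some admissible $\gamma$ works then (\ref{Eq14}) works'' or treat the $tick$ case separately.

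The second gap is in your derivation of admissibility condition (\ref{Eq11}) from network controllability alone. If $tick\notin S_i(\psi^{f_i}(\mu))$, formula (\ref{Eq14}) only yields \emph{some} $\nu\in\mathcal{L}(\tilde{H})$ with $\nu\,tick\in\mathcal{L}(\tilde{G})\setminus\mathcal{L}(\tilde{H})$ and $\psi^{f_i}(\nu)=\psi^{f_i}(\mu)$ --- not that $\mu\,tick$ itself leaves $\mathcal{L}(\tilde{H})$. To transfer the bad continuation from $\nu$ to $\mu$ you must first invoke network joint observability (if $\mu\,tick\in\mathcal{L}(\tilde{H})$ while $\nu\,tick\in\mathcal{L}(\tilde{G})\setminus\mathcal{L}(\tilde{H})$ and $\psi^{f_i}(\mu)=\psi^{f_i}(\nu)$, then (\ref{Eq13}) is violated), and only then does condition 2) of Definition \ref{Def5} supply the enforceable event that contradicts the hypothesis of (\ref{Eq11}). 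This is exactly the two-step argument in the paper's admissibility proof. The remainder of your sketch, including the explicit treatment of nonblockingness (which the paper leaves implicit), is fine.
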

\begin{proof}
Please see Appendix B.
\end{proof}

\subsection{Application}

\begin{figure}
	\begin{center}
		\includegraphics[width=5.0cm]{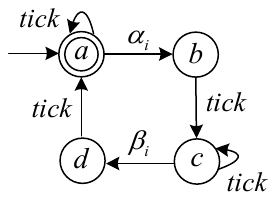}    
		\caption{A production line with two robots.} 
		\label{Fig6}
	\end{center}
\end{figure}

We consider the production line depicted in Fig. \ref{Fig6}, which consists of machines A and B, and a conveyor.
Each part needs to be successively processed by machines A, B, and A. That is, a part will be done in three steps, accomplished by  machines A, B, and A, respectively.
Two robots, named robots 1 and 2, cooperatively work at the production line.
Specifically, robot 1 first takes a part from the conveyor and puts it on machine A, and machine A  starts to process this part.
When machine A finishes the first step, robot 1 moves this part  to machine B, and machine B continues to process it. 
When machine B finishes the second step, robot 2 moves this part back to machine A for accomplishing the third step.
The part is done if machine A finishes the third step,  and then robot 2  places it back to the conveyor.
The above process is repeated again and again.
Each machine can only process one part at a time.
If  machine B is processing a part and a new part is sent to machine A, the system will get stuck as neither of these two parts can be further processed, which should be prevented from happening.

\begin{figure}
\centering 
\subfigure[Systems $G_i,i=1,2$]{\label{Fig41}\includegraphics[width=2.9cm]{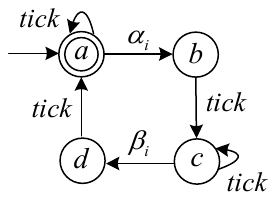}}
\subfigure[System $G$]{\label{Fig42}\includegraphics[width=5.6cm]{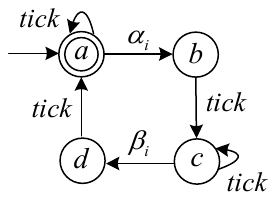}}
\subfigure[The communication automaton $\tilde{G}$]{\label{Fig43}\includegraphics[width=8.8cm]{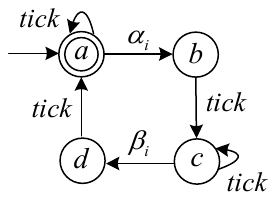}}
\caption{Automata $G_i$, $i=1,2$, $G$, and $\tilde{G}$.} \label{Fig4}
\end{figure}

System $G_i$ for robot $i$ is constructed in Fig. \ref{Fig41}.
The event set for subsystem $G_i$ is $\tilde{\Sigma}_i=\{\alpha_i, \beta_i, tick\}$, $i=1,2$ where $\alpha_1$: robot 1 takes a part from the conveyor and places it on machine A; $\beta_1$: robot 1 takes the part away from machine A and puts it on machine B; $\alpha_2$: robot 2 takes the part away from machine B and places it on machine A; $\beta_2$: robot 2 puts this part back to the conveyor from machine A. 
We assume that it takes at least one unit of time to finish  $\alpha_1$, $\alpha_2$, $\beta_1$, and $\beta_2$.
As we can see  in  $G_i$, $\beta_i$ can occur after $\alpha_i$ only if one $tick$ has occurred, $i=1,2$.
The overall system can be obtained as $G_1||G_2$.
Some strings in $G_1||G_2$ are not feasible as a robot can take away a part from a machine only if there is a part on the machine.
We refine the structure of $G_1||G_2$ by removing all the physically unfeasible strings and obtain   $G$, as depicted in Fig. \ref{Fig42}.
In state 8 of $G$, the system gets stuck as mentioned above.
The desired system $H$ can be obtain by removing all the transitions of $G$ connected to state 8 (highlighted by blue in Fig. \ref{Fig42}).

We assume that robots 1 and 2 are controlled by supervisors 1 and  2, respectively.
Supervisors 1 and 2 can communicate with each other, i.e., $\mathbf{COM}_{12}=\mathbf{COM}_{21}=1$.
We define $\tilde{\Sigma}_{o,i}=\tilde{\Sigma}_{c,i}=\{\alpha_i,\beta_i, tick\}$, $i=1,2$.
We also define  $\Sigma_{12}=\{\alpha_1,\beta_1\}$ and $\Sigma_{21}=\{\alpha_2,\beta_2\}$.
The communication delays for both $\mathbf{CH}_{12}$ and $\mathbf{CH}_{21}$ are  upper bounded by $1$, i.e., $N_{12}=N_{21}=1$.
We define $\Sigma_{L,12}=\{\alpha_1\}$ and $\Sigma_{L,21}=\{\beta_2\}$ as the set of events that may be lost.
The communication automaton  $\tilde{G}$ is constructed in Fig. \ref{Fig43}.
$\tilde{H}$ can be obtained by deleting all the illegal transitions of $\tilde{G}$ (highlighted by blue).
By Fig. \ref{Fig43}, we must disable $\alpha_1$  at states $(4,\varepsilon,\varepsilon)$ and $(4,(\beta_1,1),\varepsilon)$ but enable $\alpha_1$ at states $(0,\varepsilon,\varepsilon)$ and $(0,\varepsilon,(\beta_2,1))$.
Since the occurrence of $\alpha_2$ must be communicated to supervisor $1$ (modeled by $f_{21}(\alpha_2)$) before $\alpha_1$ occurs at $(0,\varepsilon,\varepsilon)$ and $(0,\varepsilon,(\beta_2,1))$, supervisor 1 can always distinguish strings ending up at state $4$ and strings ending up at state $0$.
Therefore, $K=\mathcal{L}_m(H)$ is network jointly observable with respect to $G$.
Since all the events are controllable, the first condition of network controllability is trivially true.
Moreover, since $\mu \in \mathcal{L}(\tilde{H})\wedge \mu\ tick \in \mathcal{L}(\tilde{G})\Rightarrow \mu\ tick \in \mathcal{L}(\tilde{H})$, $K=\mathcal{L}_m(H)$ is network controllable with respect to $G$. 
Additionally,  $\mathcal{L}(\tilde{H})$ is $\mathcal{L}_m(\tilde{G})$-closed.
By Theorem \ref{Theo1}, a set of admissible nonblocking supervisors $\gamma=[S_1,S_2]$ can be obtained as in (\ref{Eq14}).

\section{Conclusion}
In this paper, we have considered the distributed supervisory control problem with nondeterministic delays and losses.
The system has been modeled as a timed automaton, where ``$tick$'' is used to characterize the elapse of one unit of time.
Under the assumption that the communication delays are upper bounded by a finite number of $tick$ occurrences, we have developed procedures to define observations of the supervisors by taking the communication delays and losses into consideration.
Based on the developed procedures, we have derived the necessary and sufficient conditions for
the existence of a set of admissible supervisors to achieve the control objective deterministically.
Finally,  a practical example has also been provided to show the application of the proposed framework.

\bibliographystyle{IEEEtran}     
\bibliography{BibFiles/articles}

\appendix

\subsection{Proof of Proposition \ref{Prop1}}
\begin{proof}
We first prove $\psi(\mathcal{L}(\tilde{G})) \subseteq \mathcal{L}(G)$.
For any $\mu \in \mathcal{L}(\tilde{G})$, we write $\tilde{\delta}(\tilde{q}_0,\mu)=({q},\bar{\theta})$ for ${q}\in {Q}$ and $\bar{\theta}\in\Theta$.
We now prove $\delta(q_0,\psi(\mu))={q}$ by induction on the finite length of sequences in $\mathcal{L}(\tilde{G})$.

Since $\tilde{\delta}(\tilde{q}_0,\varepsilon)=\tilde{q}_0=(q_0, \bar{\theta}_0)$ and $\psi(\varepsilon)=\varepsilon$ and $\delta(q_0,\varepsilon)=q_0$, the base case is true.

The induction hypothesis is that for any $\mu \in \mathcal{L}(\tilde{G})$ with $|\mu| \le k$, if $\tilde{\delta}(\tilde{q}_0,\mu)=(q,\bar{\theta})$, then $\delta(q_0,\psi(\mu))=q$.
We next prove the same is also true for $\mu e \in \mathcal{L}(\tilde{G})$ with $|\mu|=k$.
We write $\tilde{\delta}(\tilde{q}_0,\mu e)=(q',\bar{\theta}')$.
By definition, (i) $e \in \tilde{\Sigma}$ or (ii) $e\in \Sigma^f \cup \Sigma^g$.
We consider each of them separately as follows.

If $e \in \tilde{\Sigma}$, by (\ref{Eq5}) and (\ref{Eq6}), we have $q'=\delta(q,e)$.
By the induction hypothesis, $q=\delta(q_0,\psi(\mu))$.
By the definition of $\psi(\cdot)$, $\psi(\mu e)=\psi(\mu)e$.
Thus, $\delta(q_0,\psi(\mu e))=\delta(\delta(q_0,\psi(\mu)),e)=\delta(q,e)=q'$.

Otherwise, if $e\in \Sigma^f \cup \Sigma^g$, by (\ref{Eq7}) and (\ref{Eq8}), $q'=q$.
By the induction hypothesis, $q=\delta(q_0,\psi(\mu))$.
By the definition of $\psi(\cdot)$, $\psi(\mu e)=\psi(\mu)$.
Thus, $\delta(q_0,\psi(\mu e))=\delta(q_0,\psi(\mu))=q=q'$.

We next prove $\psi(\mathcal{L}(\tilde{G})) \supseteq \mathcal{L}(G)$.
Let us denote $E=\Sigma_{11}\cup\cdots\Sigma_{1n} \cup \cdots \cup \Sigma_{n1} \cup \cdots \cup \Sigma_{nn}$ by the set of all the communication events.
For any $s \in \mathcal{L}(G)$, we write $s=t_0\sigma_1t_1 \cdots t_{k-1}\sigma_k t_k$ for $t_z \in (\tilde{\Sigma} \setminus E)^*$ and $\sigma_z \in E$.
Since $\sigma_z \in E$, without loss of generality, we write $\sigma_z \in \Sigma_{i_z j_z}$ for $z=1,\ldots, k$.
By the definition of $\tilde{G}$, one can check that $\mu=t_0\sigma_1f_{i_1j_1}(\sigma_1)t_1 \cdots t_{k-1}\sigma_k f_{i_kj_k}(\sigma_k) t_k \in \mathcal{L}(\tilde{G})$.
By the definition of $\psi(\cdot)$, we have $\psi(\mu)=s$.
Since $s$ is arbitrarily given, $\psi(\mathcal{L}(\tilde{G})) \supseteq \mathcal{L}(G)$.
\end{proof}

\subsection{Proof of Theorem \ref{Theo1}}
\begin{proof}

We first prove that $\gamma=[S_1,\ldots,S_n]$ is admissible.
By Definition \ref{Def3}, we need to prove that $S_i$ satisfies (\ref{Eq10}) and (\ref{Eq11})  for all $i\in \mathcal{A}$.
Since $(\forall t \in \psi^{f_i}(\mathcal{L}(\tilde{G})))\tilde{\Sigma}_{uc,i} \subseteq S_i(t)$ and $\mathcal{L}(\tilde{G},\gamma)\subseteq \mathcal{L}(\tilde{G})$,  $(\forall \mu \in \mathcal{L}(\tilde{G},\gamma))\tilde{\Sigma}_{uc,i}\subseteq S_i(\psi^{f_i}(\mu))$, which implies that condition (\ref{Eq10}) of Definition \ref{Def3} is true.
We now prove that $S_i$ satisfies condition (\ref{Eq11}).
The proof is by contradiction.
Suppose that (\ref{Eq11}) is not true, i.e., 
\begin{align*}
&(\exists \mu \in \mathcal{L}(\tilde{H}))[\tilde{\Gamma}_H(\tilde{\delta}_H(\tilde{q}_0,\mu)) \cap \Sigma_{for}=\emptyset]\wedge \notag \\
& [tick \in \tilde{\Gamma}(\tilde{\delta}(\tilde{q}_0,\mu))] \wedge  [tick \notin S_i(\psi^{f_i}(\mu))].
\end{align*}
Since $tick \notin S_i(\psi^{f_i}(\mu))$, by (\ref{Eq14}), there exists $\nu \in \mathcal{L}(\tilde{H})$ such that $\nu \ tick \in \mathcal{L}(\tilde{G})\setminus \mathcal{L}(\tilde{H})$ and $\psi^{f_i}(\mu)=\psi^{f_i}(\nu)$.
Since $tick \in \tilde{\Gamma}(\tilde{\delta}(\tilde{q}_0,\mu))$, we have $\mu\ tick \in \mathcal{L}(\tilde{G})$.
Furthermore, by network joint observability, $\mu\ tick \notin \mathcal{L}(\tilde{H})$, because otherwise,
\begin{align*}
&(\exists \mu, \nu \in \mathcal{L}(\tilde{H}))(\exists \sigma=tick \in \tilde{\Sigma}_{c}) \mu\sigma\in \mathcal{L}(\tilde{H}) \wedge\notag \\
&\nu\sigma \in \mathcal{L}(\tilde{G}) \setminus \mathcal{L}(\tilde{H})  \wedge [(\exists i \in \mathcal{A}^c(\sigma))\psi^{f_i}(\nu)=\psi^{f_i}(\mu)],
\end{align*}
which contradicts that $K$ is network jointly observable with respect to $\mathcal{L}({G})$.
Overall, there exists $\mu \in \mathcal{L}(\tilde{H})$ such that $\mu \ tick \in \mathcal{L}(\tilde{G}) \setminus \mathcal{L}(\tilde{H})$ and $\tilde{\Gamma}_H(\tilde{\delta}_H(\tilde{q}_0,\mu)) \cap \Sigma_{for}=\emptyset$, which contradicts that $K$ is network controllable with respect to $\mathcal{L}({G})$.
Thus,  $S_i$ satisfies (\ref{Eq10}) and (\ref{Eq11}) for all $i\in \mathcal{A}$, which implies that $\gamma=[S_1,\ldots,S_n]$ is admissible.

(\textbf{IF Part}) 
Let us first prove $\mathcal{L}(\tilde{G},\gamma)=\mathcal{L}(\tilde{H})$ by proving that $\mathcal{L}(\tilde{G},\gamma) \subseteq \mathcal{L}(\tilde{H})$ and $\mathcal{L}(\tilde{G},\gamma) \supseteq \mathcal{L}(\tilde{H})$.
The proof is by induction on the length of the strings in $\mathcal{L}(\tilde{G},\gamma)$ and $\mathcal{L}(\tilde{H})$.

($\supseteq$) Since $\varepsilon \in \mathcal{L}(\tilde{H})$ and $\varepsilon \in \mathcal{L}(\tilde{G},\gamma)$, $\varepsilon \in \mathcal{L}(\tilde{H}) \Rightarrow \varepsilon \in \mathcal{L}(\tilde{G},\gamma)$.
The induction hypothesis is that for any $\mu \in \mathcal{L}(\tilde{H})$ such that $|\mu| \le l$, $\mu \in \mathcal{L}(\tilde{H}) \Rightarrow \mu \in \mathcal{L}(\tilde{G},\gamma)$.
We next prove that the same is also true for $\mu e \in \mathcal{L}(\tilde{H})$ such that $|\mu e|=l+1$ as follows.
Since $e \in \tilde{E}$, we have $e \in \Sigma^f \cup \Sigma^g \cup \tilde{\Sigma}_{uc}$ or $e \in \tilde{\Sigma}_c$.
We consider each of them separately as follows.

\textsl{Case 1:} $e \in \Sigma^f \cup \Sigma^g \cup \tilde{\Sigma}_{uc}$. 
Since $\mu e \in \mathcal{L}(\tilde{H})$ and $\mathcal{L}(\tilde{H}) \subseteq \mathcal{L}(\tilde{G})$, we have $\mu e \in \mathcal{L}(\tilde{G})$.
Since $\mu \in \mathcal{L}(\tilde{G},\gamma)$ and  $\mu e \in \mathcal{L}(\tilde{G})$ and $e \in \Sigma^f \cup \Sigma^g \cup \tilde{\Sigma}_{uc}$, by Definition \ref{Def2}, $\mu e \in \mathcal{L}(\tilde{G},\gamma)$.

\textsl{Case 2:} $e \in \tilde{\Sigma}_c$. 
Since $\mu e \in \mathcal{L}(\tilde{H})$, by network joint observability,  $(\forall i \in \mathcal{A}^c(e))(\forall \nu \in \mathcal{L}(\tilde{H}))\nu e \in \mathcal{L}(\tilde{G})\setminus \mathcal{L}(\tilde{H})\Rightarrow \psi^{f_i}(\mu)\neq \psi^{f_i}(\nu)$.
Thus, by (\ref{Eq14}),  $e \in S_i(\psi^{f_i}(\mu))$ for all $i \in \mathcal{A}^c(e)$.
Moreover, since  $\mu \in \mathcal{L}(\tilde{G},\gamma)$ and  $\mu e \in \mathcal{L}(\tilde{G})$, by Definition \ref{Def2}, $\mu e \in \mathcal{L}(\tilde{G},\gamma)$.
Therefore, $\mu e \in \mathcal{L}(\tilde{H}) \Rightarrow \mu e \in \mathcal{L}(\tilde{G},\gamma)$.

($\subseteq$) Since $\varepsilon \in \mathcal{L}(\tilde{G},\gamma)$ and $\varepsilon \in \mathcal{L}(\tilde{H})$, $\varepsilon \in \mathcal{L}(\tilde{G},\gamma) \Rightarrow \varepsilon \in \mathcal{L}(\tilde{H})$.
The induction hypothesis is that for any $\mu \in \mathcal{L}(\tilde{G},\gamma)$ such that $|\mu| \le l$,  $\mu \in  \mathcal{L}(\tilde{G},\gamma) \Rightarrow \mu \in \mathcal{L}(\tilde{H})$.
We next prove the same is also true for $\mu e \in \mathcal{L}(\tilde{G},\gamma)$ such that $|\mu e|=l+1$.
By definition, $e\in \Sigma^f \cup \Sigma^g$ or $e \in \tilde{\Sigma}$.
We consider each of them separately as follows.

\textsl{Case 1:} $e\in \Sigma^f \cup \Sigma^g$. 
Since $\mu, \mu e \in \mathcal{L}(\tilde{G})$, without loss of generality, we write $\tilde{\delta}(\tilde{q}_0,\mu)=({q},\bar{\theta})$ and $\tilde{\delta}(\tilde{q}_0,\mu)=({q}',\bar{\theta}')$ for $q,q'\in Q$ and $\bar{\theta},\bar{\theta}'\in \Theta$.
By the definition of $\psi(\cdot)$,  $\psi(\mu e)=\psi(\mu)$.
By Proposition 1, ${q}=\delta(q_0,\psi(\mu))=\delta(q_0,\psi(\mu e))={q}'$.
Since $\mu \in \mathcal{L}(\tilde{H})$, ${q}={q}'\in Q_H$.
Thus, $\mu e\in \mathcal{L}(\tilde{H})$.

\textsl{Case 2:} $e \in \tilde{\Sigma}$.
If $e \in \tilde{\Sigma}_{uc}$, since $\mu \in \mathcal{L}(\tilde{H})$ and $\mu e \in \mathcal{L}(\tilde{G},\gamma)\subseteq\mathcal{L}(\tilde{G})$, the network controllability immediately yields $\mu e \in \mathcal{L}(\tilde{H})$.
If $e \in \tilde{\Sigma}_{c}$, since  $\mu e \in \mathcal{L}(\tilde{G},\gamma)$, by Definition \ref{Def2},   $e \in S_i(\psi^{f_i}(\mu))$ for all $i\in \mathcal{A}^c(e)$.
By (\ref{Eq14}), for all $\nu \in \mathcal{L}(\tilde{H})$ and $\nu e \in \mathcal{L}(\tilde{G})\setminus \mathcal{L}(\tilde{H})$, we have $\psi^{f_i}(\mu) \neq \psi^{f_i}(\nu)$, $i \in \mathcal{A}^c(e)$.
Therefore,  $\mu e \in \mathcal{L}(\tilde{H})$.

Overall, we have $\mathcal{L}(\tilde{G},\gamma) \subseteq \mathcal{L}(\tilde{H})$ and $\mathcal{L}(\tilde{G},\gamma) \supseteq \mathcal{L}(\tilde{H})$, which implies $\mathcal{L}(\tilde{G},\gamma)=\mathcal{L}(\tilde{H})$.

We next prove $\mathcal{L}_m(\tilde{G},\gamma)=\mathcal{L}_m(\tilde{H})$.
By definition, $\mathcal{L}_m(\tilde{G},\gamma)=\mathcal{L}(\tilde{G},\gamma)\cap \mathcal{L}_m(G)$.
Moreover, since $\mathcal{L}(\tilde{G},\gamma)=\mathcal{L}(\tilde{H})$, we have $\mathcal{L}_m(\tilde{G},\gamma)=\mathcal{L}(\tilde{H}) \cap \mathcal{L}_m(G)$.
Since $\mathcal{L}(\tilde{H})$ is $\mathcal{L}_m(\tilde{G})$-closed, $\mathcal{L}_m(\tilde{H})=\mathcal{L}(\tilde{H})\cap \mathcal{L}_m(\tilde{G})$.
Thus, $\mathcal{L}_m(\tilde{G},\gamma)=\mathcal{L}_m(\tilde{H})$.


(\textbf{ONLY IF Part}) We assume that $\gamma=[S_1,\ldots, S_n]$ given
by: for $t \in \psi^{f_i}(\mathcal{L}(\tilde{G}))$, 
\begin{align*}
S_i(t)=&\tilde{\Sigma}_{uc,i} \cup (\tilde{\Sigma}_{c,i}\setminus\{\sigma \in \tilde{\Sigma}_{c,i}: (\exists \mu \in \mathcal{L}(\tilde{H})) \notag \\
&\mu\sigma \in \mathcal{L}(\tilde{G})\setminus \mathcal{L}(\tilde{H}) \wedge \psi^{f_i}(\mu)=t\}),
\end{align*}
solves the DNNSCP, i.e., $\gamma$ is admissible, $\mathcal{L}(\tilde{G},\gamma)=\mathcal{L}(\tilde{H})$, and $\mathcal{L}_m(\tilde{G},\gamma)=\mathcal{L}_m(\tilde{H}).$
We now prove 1), 2), and 3) of Theorem \ref{Theo1}.

{Network joint observability}:
By contradiction. we assume that (i) there exist $\mu \in \mathcal{L}(\tilde{H})$ and $\sigma \in \tilde{\Sigma}_c$ such that $\mu \in \mathcal{L}(\tilde{H})$ and $\mu \sigma \in \mathcal{L}(\tilde{G}) \setminus \mathcal{L}(\tilde{H})$; (ii) there exist $i \in \mathcal{A}^c(\sigma)$, $\mu_i\sigma \in \mathcal{L}(\tilde{H})$  such that $\psi^{f_i}(\mu)=\psi^{f_i}(\mu_i)$.
By (\ref{Eq14}), $\sigma \notin S_i(\psi^{f_i}(\mu))=S_i(\psi^{f_i}(\mu_i))$.
Moreover, since $\mu_i \in \mathcal{L}(\tilde{H})=\mathcal{L}(\tilde{G},\gamma)$ and $\mu_i \sigma \in \mathcal{L}(\tilde{H})\subseteq \mathcal{L}(\tilde{G})$, by Definition \ref{Def2},  $\mu_i \sigma \in \mathcal{L}(\tilde{G})\setminus \mathcal{L}(\tilde{G},\gamma)$.
Since $\mu_i \sigma \in \mathcal{L}(\tilde{H})$, we have $\mathcal{L}(\tilde{G},\gamma)\neq \mathcal{L}(\tilde{H})$, which contradicts $\mathcal{L}(\tilde{G},\gamma)=\mathcal{L}(\tilde{H})$.

{Network controllability}: 
We first prove that the first condition of network controllability is true, i.e., $\mathcal{L}(\tilde{H})\tilde{\Sigma}_{uc}\cap \mathcal{L}(\tilde{G}) \subseteq \mathcal{L}(\tilde{H})$.
To do this, we take arbitrary $\mu \in \mathcal{L}(\tilde{H})=\mathcal{L}(\tilde{G},\gamma)$ and $\sigma \in \tilde{\Sigma}_{uc}$ such that $\mu\sigma\in \mathcal{L}(\tilde{G})$.
Since $\sigma \in \tilde{\Sigma}_{uc}$, we have $\sigma \in \tilde{\Sigma}_{uc,i}$ for some $i \in \mathcal{A}$.
By (\ref{Eq14}), $\sigma \in S_i(\psi^{f_i}(\mu))$.
Since $\mu \in \mathcal{L}(\tilde{G},\gamma)$, $\mu \sigma \in \mathcal{L}(\tilde{G})$, and $\sigma \in S_i(\psi^{f_i}(\mu))$, by Definition \ref{Def2}, $\mu\sigma \in \mathcal{L}(\tilde{G},\gamma)= \mathcal{L}(\tilde{H})$.
Since  $\mu \in \mathcal{L}(\tilde{H})$ and $\sigma \in \tilde{\Sigma}_{uc}$ are arbitrarily given,  $\mathcal{L}(\tilde{H}) \tilde{\Sigma}_{uc}\cap \mathcal{L}(\tilde{G}) \subseteq \mathcal{L}(\tilde{H})$.

We now prove that the second condition of network controllability is also true, i.e., (\ref{Eq12}) holds.
The proof is by contradiction.
Suppose that (\ref{Eq12}) is not true, i.e., 
there exists $\mu \in \mathcal{L}(\tilde{H})$ such that $\mu \ tick \in \mathcal{L}(\tilde{G}) \setminus \mathcal{L}(\tilde{H})$ and $\tilde{\Gamma}_{{H}}(\tilde{\delta}_{{H}}(\tilde{q}_0,\mu)) \cap \Sigma_{for}=\emptyset.$
Since $\mu \in \mathcal{L}(\tilde{H})$ and $\mu \ tick \in \mathcal{L}(\tilde{G}) \setminus \mathcal{L}(\tilde{H})$, by (\ref{Eq14}), we have $tick \notin S_i(\psi^{f_i}(\mu))$ for all $i\in \mathcal{A}^c(tick)=\mathcal{A}$.
Moreover, since $\mu \ tick \in \mathcal{L}(\tilde{G})$, we have $tick \in \tilde{\Gamma}(\tilde{\delta}(\tilde{q}_0,\mu))$.
Thus, there exists $\mu \in \mathcal{L}(\tilde{H})$ such that $\tilde{\Gamma}_{{H}}(\tilde{\delta}_{{H}}(\tilde{q}_0,\mu)) \cap \Sigma_{for}=\emptyset$ and $tick \in \tilde{\Gamma}(\tilde{\delta}(\tilde{q}_0,\mu))$ and $tick \notin S_i(\psi^{f_i}(\mu))$ for all $i\in \mathcal{A}$. By Definition \ref{Def3},  $\gamma=[S_1,\ldots, S_n]$ is not admissible, which is a contradiction.
Therefore, the second condition of the network controllability is also true.
By Definition \ref{Def5}, $\mathcal{L}(\tilde{H})$ is network controllable with respect to $\mathcal{L}(\tilde{G})$ and $\tilde{\Sigma}_{uc}$.

{$\mathcal{L}_m(\tilde{G})$-closure}:
 By the definition of $\mathcal{L}_m(\tilde{G},\gamma)$, $\mathcal{L}_m(\tilde{G},\gamma)=\mathcal{L}(\tilde{G},\gamma) \cap \mathcal{L}_m(G)$. 
 Since $\mathcal{L}_m(\tilde{G},\gamma)=\mathcal{L}_m(\tilde{H})$ and  $\mathcal{L}(\tilde{G},\gamma)=\mathcal{L}(\tilde{H})$, it has $\mathcal{L}_m(\tilde{H})=\mathcal{L}(\tilde{H}) \cap \mathcal{L}_m(\tilde{G})$, which is the $\mathcal{L}_m(\tilde{G})$-closure condition.
\end{proof}

\end{document}